\documentclass[11pt,leqno]{article}
\usepackage{amsmath,amssymb,amsfonts}
\pagestyle{empty}

\hoffset = -0.5 in \voffset = 0.1 in \setlength
 {\textwidth}{6.0
in} \setlength {\topmargin}{0 in} \setlength {\textheight}{8.6 in}

\newtheorem{theorem}{Theorem}[section]
\newtheorem{proposition}[theorem]{Proposition}

\newtheorem{definition}[theorem]{Definition}
\newtheorem{remark}[theorem]{Remark}
\newtheorem{notation}[theorem]{Notation}
\newtheorem{example}[theorem]{Example}

\newenvironment{proof}{\mbox{\bf Proof.}}{\mbox{$\dashv$}\bigskip}

\begin{document}

\begin{center}

{\Large\bf Computation Environments (2)}\\ { Persistently Evolutionary Semantics}\\
\vspace{.25in}
{\bf  Rasoul Ramezanian}\\
Department of Mathematical Sciences, \\
Sharif University of Technology,\\
P. O. Box 11365-9415, Tehran, Iran\\
 ramezanian@sharif.edu
\end{center}
\begin{abstract}
\noindent In the manuscript titled ``Computation environment (1)",
we introduced a notion called computation environment as an
interactive model for computation and complexity theory.  In this
model, Turing machines are not autonomous entities and   find
their meanings through  the interaction between a computist and a
universal processor, and thus due to evolution of the universal
processor, the meanings of   Turing machines could change. In this
manuscript, we discuss persistently evolutionary intensions. We
introduce a new semantics, called persistently evolutionary
semantics,   for predicate logic that the
 meaning of function  and predicate  symbols are not already predetermined,
 and
 predicate and function symbols find their meaning through the
 interaction of the subject with the language. In (classic) model theory,
the mathematician who studies a structure is assumed as a god who
lives out of the structure, and the study of the mathematician
does not effect the structure. The meaning of predicate and
function symbols are assumed to be independent of the
mathematician who does math. The persistently evolutionary
semantics could be regarded as a start of    ``Interactive Model
Theory" as a new paradigm in model theory (similar to the paradigm
of interactive computation). In interactive model theory, we
suppose that a mathematical structure should consist of two parts:
1) an intelligent agent (a subject), and 2) an environment
(language), and every things should find its meaning through the
interaction of these two parts.

We introduce   persistently evolutionary Kripke structure  for
propositional and predicate logic. Also, we propose a persistently
evolutionary Kripke semantics  for the notion of computation,
where the intension of a code of a Turing machine persistently
evolve. We show that in this Kripke model the subject can never
know $\mathrm{P=NP}$.

\end{abstract}
\begin{itemize}
\item[] \textbf{Keywords}: Interactive Model Theory, Free will,
Persistent Evolution.
\end{itemize}
\section{Introduction}

The human being  interacts with its surrounded environment, and
through this interaction, the environment finds its meaning for
him. He percepts its environment (that he calls it the real world)
through its sensors, and in order to understand it, he constructs
mental symbolic forms~\cite{kn:kasir}, and formal structures   in
his mind. After that, he reasons about its environment through
these mental constructions.

 Euclidian Geometry, Ptolemy's Almagset, Copernicus
revolution, Turing computation, and etc are some few samples of
theories constructed by the human being  through interaction with
the world.

After the  human being constructs a formal theory, he observes
  and means its environment through it as a window to the outer world. He proposes questions about its
   environment in his (constructed mental) formal theory, and  tries to answer
  them in the context of the same   theory. For
  example, in astronomy, after Ptolemy's Almagset, the human being
  tried to find an explanation for the irregular motions of the
  \emph{wandering stars} in the context of Ptolemy's Almagset,
  and in geometry  the human being attempted to prove \emph{parallel postulate} using
  Euclid's first four postulates.

Model theory, a branch of mathematical logic~\cite{kn:model},
studies mathematical structure  in order to determine  that given
a   theory (a set of formulas) $\Gamma$, which other formulas are
true in all structures which all formulas in $\Gamma$ are true in
them. In (classic) model theory, the mathematician   is regarded
as a god who lives out of a mathematical structure, and thinking
activities of the mathematician does not effect the structure. In
this way, the role of the human being in developing   formal
systems is ignored. In this paper, we introduce a new semantics
for  predicate logic, that the meaning of predicate and function
symbols are not already predetermined, and they find their meaning
through the interaction of a subject with the logical language. We
name the proposed semantics ``the persistently evolutionary
semantics".

The paper is organized as follows:

\noindent In Section~\ref{CID}, We discuss persistently
evolutionary intensions.  We scrutinize that whether it is
possible that the intension of a word persistently evolves whereas
the subject cannot be aware of it.

\noindent In section~\ref{SEM}, we propose persistently
evolutionary Kripke semantics to formalize the notion of
persistently evolutionary intensions.

\noindent In section~\ref{Logic}, and section~\ref{Logic2},  we
introduce persistently evolutionary semantics for propositional
and predicate logic.

\noindent In section~\ref{compu}, using persistently evolutionary
semantics for predicate logic, we formalize the argument of
section~7 of the manuscript~\cite{kn:comp1}.

\section{Persistently Evolutionary Intensions}\label{CID}

The human being, as an intelligent agent, uses   languages to
express and encode the intensions (concepts) that he constructs to
mean his environment.     Intension refers to a property that
specifies the set of all possible things that a \emph{word} (a
finite string) could describe, while extension refers to the set
of all actual things the word describes. Also an intensional
definition of a set of objects is to intend the set by a word, and
an extensional definition of a set of objects is by listing all
objects. Obviously, it is impossible to give an extensional
definition for an infinite set. For example, the human being
intends  an infinite subset of natural numbers by the word
``prime", and he can never list all prime numbers. As another
example, the human being, to define the set of all Turing machines
has no way except to use the intensional definition.

  In
 theory of computation, for every Turing machine $T$, $L(T)$ refers
to the set of all strings that the Turing machine $T$ halts for
them. We may say the Turing machine $T$ is an intensional
definition for the set $L(T)$, or in other words, the human being
intends the set $L(T)$ by the \emph{word} (finite string) $T$
(note that Turing machines can be coded in finite strings).

The human being   constructs concepts to mean its environment
through them. It is possible that both the human being and its
environment (the real world, the nature) evolve as the human being
checks that whether an object   is an extension of a word. This
evolution could be persistent such that the nature (or the human
being)  works well-defined. That is, if an output  $z$ is already
provided for an input $[x,y]$ ($x$ is a word, and $y$ is a thing
in order to be checked whether it is an extension of $x$)
 then whenever in future,  the same input
$[x,y]$ are chosen, the  output would be the same $z$. In other
words, the meaning of the word $x$ may change, but in a
conservative manner, that is, all the things  that the human being
already realized that whether they are extensions of $x$ or not,
their status  remains unchanged.

\begin{definition}\label{ped}
Let $w$ be a word (a finite string), and $\mathcal{O}$ a domain of
objects. We say the intension of the word $w$ for a subject is
persistently evolutionary (or its extension is order-sensitive)
  whenever in the course that
   the subject chooses an object $o\in\mathcal{O}$ to
  check that whether it is an extension of the word $w$ or not, then the intension of $w$   changes,
   but persistently,
i.e., if the agent (the subject)  has checked whether an object
$d$ is an extension of $w$ already, and the answer has been yes
(has been no), then whenever in future the agent checks again
whether the same object $d$ is an extension, the answer would be
the same yes (would be the same no). What remains unchanged is the
word $w$ (the syntax), but its meaning (the semantics) changes for
the subject. In this way, the set of all extensions of the word
$w$ is not predetermined and it depends on the order that the
agent chooses objects from the domain $\mathcal{O}$ to check
whether they are extensions of $w$ or not.

\end{definition}

\begin{itemize}
\item[\textbf{Q1)}] Is it possible that an intension of a word
would be a persistently evolutionary one?
\end{itemize}The answer is Yes.
To answer the above question, we should first clarify  what the
meaning (the intension) of a word is?  For a
  subject (the human being) the meaning of a word is given
by how the  subject interacts via the word with the environment
(the language). This interaction is nothing except choosing
objects and checking whether they are extensions of the word.

Wittgenstein says the meaning of a word is identified by   how it
is used.
\begin{quote}
``For a large class of cases-though not for all- in which we
employ the word `meaning' it can be defined thus: the meaning of a
word is its use in the language"~\footnote{The quote is written
from Stanford Encyclopeida of Philosophy~\cite{kn:plato}.}
\end{quote}
The use of a word happens in time, therefore we may say that the
meaning of a word exists in time, and it is an \emph{unfinished
entity} similar to a choice sequence~\cite{kn:ob}. The meaning of
a word is not predetermined, and as time passes the word finds its
meaning through the interaction of the subject with the language.
  The meaning  is a dynamic temporal (mental) construction~\footnote{An object is
temporal exactly if it exists in time, and it is dynamic if at
some moments are part added to it or removed from it~\cite{kn:BH},
page~16. Another philosophical framework that knows possible the
evolution of a meaning is ``dynamic Semantics" (see
http://plato.stanford.edu/entries/dynamic-semantics/). In this
framework, meaning is context change potential.}. In Brouwer's
intuitionism, mathematical objects are mental construction.
Brouwer's choice sequences, as a kind of mathematical objects, are
dynamic temporal objects~(see~page~16,~\cite{kn:BH}). The meaning
of a word  (which is identified via how it is used by the human
being) has lots of common with   a choice sequence. At each stage
of time, the human being only experienced a finite set of things
that whether they are extensions of the word or not. Also, in a
choice sequence, at each stage of time only a finite segment of
the sequence is determined. As the human being freely chooses
another thing  to check its extension status for the word, the
meaning of the word may persistently change (the construction of
the human's brain (or mind) may persistently change). It is
similar to the act of the human being in developing a choice
sequence.

 The use of
a word is dependent to the human being and the way that he uses
(interacts  via) the word in (with) the environment.  We may
assume that the human being has freedom to choose things in any
order that he wants to check their status of being extensions of a
word. The different order of choosing objects may cause that the
meaning of the word evolves in different ways. But since the human
being cannot go back to the past, he   just lives in one way of
evolution. As soon as the human being chooses an object and checks
whether it is an extension of a word $w$, (the biological
construction of) his mind may persistently evolve, and the meaning
of the word $w$ persistently changes. Therefore, it seems possible
that the intension of  a word would be persistently evolutionary,
and

\begin{quote}
the interaction of the human being with the language may make the
meaning of a word persistently evolve. Assuming  the free will for
the human being, the behavior of the human being is not
predetermined, and as a consequence  the meaning of a word needs
not to be predetermined.
\end{quote}

\begin{itemize}
\item[\textbf{Q2)}] Is it possible for a subject to distinguish
between persistently evolutionary intensions and static ones? In
other words, is it possible for a subject to determine that
whether the intension of a word is static or persistently
evolutionary?
\end{itemize}
The answer is No. It is not possible for the human being to
recognize whether the meaning of a word, in the course of his
thinking  activities, persistently evolves or remains constant.
Suppose $w$ be a word. Two cases are possible
\begin{itemize}\item[1)] the intension of the word $w$ is static  and
\item[2)] the intension of the word $w$ is persistently
evolutionary.
\end{itemize} In both cases, at each stage of time, the human being
just has experienced the status of extension of a finite set of
objects. The human being just has access to his pervious
experiences and does not have access to the future. So at each
stage of time,  all information that the human being has
 about the word $w$ is a finite set of objects
 $\{d_1,d_2,...,d_n\}$ that their extension status are determined. This information
 of the word $w$ are the same in both cases.

 The human being cannot differ between these two cases based on his
 obtained information. The persistent evolution is similar to
 being static in view of past experiences. The difference of the
 persistent
 evolution and being static is in future. But the human being does
 not have access to the future.  As soon as, the meaning of a word
 evolves then it has been evolved and the subject cannot
 go back to the past and experience  another way of evolution.


\begin{example} Suppose that I am in a black box with two windows:
an input window and an output one. You give natural numbers as
input to the black box and receive a natural numbers as output of
the black box. I do the following strategy in the black box. I
plan to output $1$ for each input before You give the black box
$5$ or $13$ as inputs. If you give $5$ as input (and you have not
given $13$ already) then   after that time, I output $2$ for all
future inputs that have not been already given to the black box.
For those natural numbers that you have already given them as
input, I still output the same $1$.
 If you give $13$ as input (and you have not given
$5$ already) then   after that time, I output $3$ for all future
inputs that have not been already given to the black box. For
those natural numbers that you have already given them as input, I
still output the same $1$.

\end{example}
The black box of the above example behaves well defined. But it
persistently evolve through interactions with the environment, and
the function that the black box provides is not a predetermined
function. If one does not have access to the inner structure of
the black box, he could always assume that  there exists a
 a static machine in the black box.

\begin{remark} \textsc{Deterministic vs. Predetermination}.
Being deterministic does not force to be predetermined. The human
may  computationally intend  a function deterministically, but it
is not needed that the language to be predetermined. It may be
determined as time passes by the free will of the human being.
\end{remark}
We propose a postulate about the extension of a word as follows:
Suppose $w$ is a word and $\mathcal{O}$ is a class of objects. We
refer to the set of extension of $w$  by $\mathrm{E}(w)\subseteq
\mathcal{O}$. Our proposed postulate which we call it "the
Postulate of Persistent Evolution", $\mathbf{PPE}$,  says:

\begin{itemize}
\item[$\mathbf{PPE}$:] if a subject  has not yet   proved that
$\mathrm{E}(w)$ is finite (or in other words, if a subject has not
yet listed all the element of $\mathrm{E}(w)$ on paper) then he
could not yet disprove that the meaning (intension) of $w$ does
not persistently evolve.
\end{itemize}

Suppose that a subject wants to prove that \begin{itemize}
\item[i.] the meaning of a word $w$ is static and does not
persistently evolve, and  \item[ii.] the set $\mathrm{E}(w)$ is
predetermined and does not depend to the order that he chooses
objects from the domain $\mathcal{O}$ to check whether they are
extensions of $w$ or not.\end{itemize}

The subject at each stage of time, only knows the status of a
finite number of objects in $\mathcal{O}$ that whether they are
extensions of $w$ or not. Suppose $E(w)$ is infinite. Then the
subject has never written all extensions of $w$ at any stage of
time. He always could know it possible that the the meaning of the
word $w$ may persistently change. But since this change happens
persistently,   he cannot recognize whether the meaning is static
or not, based on the finite history that he has access to it.

\begin{quote}If a subject does not sense a change about a process, then
he may  (wrongly) presuppose that the process is static and
independent of his interaction with the process. In spite of this,
in the case that a process persistently evolves, the subject does
not sense any change as well!  We only sense a change whenever we
discover that an event which has been sensed before is not going
to be sensed similar to past. Persistent evolution always respects
the past. As soon as a subject experiences an event, then whenever
in future he examines the same event, he will experience it
similar to past. Persistent evolution effects the future which has
not been determined yet. \end{quote}

In other words,  the postulate $\mathbf{PPE}$ says that

\begin{center} it is not possible for a subject to differ between static
intensions and persistently evolutionary one. \end{center}
\section{Persistently Evolutionary Semantics}\label{SEM}

 In this
section, to clarify the notion of persistently evolutionary
intensions, we introduce a kind of Kripke structures  that we name
Persistently Evolutionary Kripke structures.

 Let $P=\{p_i\mid i\in I\}$ be
a   set of atomic propositional formulas for an index set
$I\subseteq \mathbb{N}$, and $A=\{a_i\mid i\in I'\}$ ($I'\subseteq
\mathbb{N}$ is an index set) be a set that is assumed as the set
of actions of an agent $ag$.

\begin{definition}
A Persistently Evolutionary Kripke structure  over a set of
actions $A$ and a set of atomic formulas $P$ is a tuple $K=\langle
S,\Pi=\{\pi_j\mid j\in J\}, \sim_{ag} , V\rangle$ where
$J\subseteq \mathbb{N}$ is an index set, and

\begin{itemize}

\item $S=A^*\times J$  is the set of all possible worlds ($A^*$ is
the set of all finite sequences of actions in $A$). For each
$s=(\vec{x},i)\in S$, we call $i$ the \emph{meaning index} of the
state $s$.

\item $\Pi$ is the set of \emph{meaning functions}. Each $\pi_i\in
\Pi$ ($i\in J$), is a partial function from $S\times A$ to
$\mathbb{F}P$ (the set of finite subsets of $P$) that its domain
is $(A^*\times\{i\})\times A$.  To each state $s=(\langle
b_1,b_2,...,b_n\rangle,i)$, and each action $a\in A$ the function
$\pi_i$ corresponds  a finite subset of $P$ as the meaning of the
action $a$, satisfying the following condition (\emph{persistently
evolutionary condition}):

for each state $s=(\langle b_1,b_2,...,b_n\rangle,i)\in S$, if an
action $a$ is appeared  in the finite sequence $\langle
b_1,b_2,...,b_n\rangle$, and $\langle
b_1,b_2,...,b_n\rangle=\langle b_1,b_2,...,b_i\rangle\langle
a\rangle\langle b_{i+2},...,b_n\rangle$, then
$\pi_i(s,a)=\pi_i((\langle b_1,b_2,...,b_i\rangle,i),a)$.

\item The agent $ag$ is an operator  which chooses actions from
$A$ and performs them. By his operation, it makes the universe
evolve. If $s=(\vec{x},i)\in S$ is the current state of the model
$K$ and $ag$ performs $a\in A$, then the current world evolves to
$ s'=(\vec{x}.\langle a\rangle,i)$. Note that via evolution, the
meaning index of the states does not change. We say the agent $ag$
lives in the meaning function $\pi_i$, or  in other words, the
actual meaning function for the agent $ag$ is $\pi_i$.

\item $V$  is a function form $S$ to $2^P$ defined as follows: for
each $s\in S$, $s=(\langle b_1,b_2,...,b_n\rangle,j)$,
\begin{center}$V(s)=\pi_j((\langle\rangle,j), b_1)\cup(\bigcup_{1\leq
i\leq n} \pi_j((\langle b_1,b_2,...,b_i \rangle,j), b_{i+1}))$.
\end{center}

\item $\sim_{ag}\subseteq S\times S$ is a binary relation which
satisfies the following condition: for all two states $s_1,s_2\in
S$, we have $s_1\sim_{ag} s_2$ whenever $s_1=(\vec{x},i)$ and
$s_2=(\vec{x},j)$ for some $\vec{x}=\langle
x_1,x_2,...,x_k\rangle$ and $i,j\in J$ such that for all $1\leq
t\leq k$, $\pi_i((\langle x_1,...,x_{t-1}\rangle ,i),
x_t)=\pi_j((\langle x_1,...,x_{t-1}\rangle ,j), x_t)$.

\end{itemize}

\end{definition}

The relation $\sim_{ag}$ is an indistinguishability  relation for
the agent $ag$. If $s_1\sim_{ag} s_2$ then it means that the agent
$ag$ cannot distinguish between these two states, since  all
experiences that he has observed in both states are the same.

\begin{definition}\label{osens} Let $K=\langle S,\Pi=\{\pi_j\mid j\in J\}, \sim_{ag} , V\rangle$ be
a persistently evolutionary Kripke Structure. We say a meaning
function $\pi_i\in \Pi$ is static if it is not order-sensitive.
That is, for every $n\in \mathbb{N}$, for every $a_1,a_2,...,a_n,
a\in A$, for every permutation $\delta:
\{1...n\}\rightarrow\{1...n\}$,

\begin{center} $\pi_i((\langle
a_{\delta(1)},a_{\delta(2)},...,a_{\delta(n)}\rangle,
i),a)=\pi_i((\langle a_{1},a_{2},...,a_{n}\rangle,
i),a)$.\end{center}

\end{definition}

\begin{notation}
For each state $s=(\vec{x},i)$, we let  $D(s)=\{s'\mid
s'=(\vec{y},i), \vec{x}~is~a~prefix~of~\vec{y}\}$.
\end{notation}
\begin{definition}\label{indist}  Let $K=\langle S,\Pi=\{\pi_j\mid j\in J\}, \sim_{ag} , V\rangle$ be
a persistently evolutionary Kripke Structure. Suppose
$s=(\vec{a},i)$, $\vec{a}\in A^*$, and $i\in J$ be a current state
that the agent $ag$ lives in. We may say that the agent $ag$ can
never become conscious that whether his world is static or
persistently evolutionary whenever   for every $s'=(\vec{a},i)\in
D(s)$ there exists a meaning function $\pi_j\in \Pi$ which is not
static, and for $s''=(\vec{a},j)$, we have $s'\sim_{ag}s''$.
\end{definition}

\begin{definition} Let $P$ be a non-empty set of propositional
variables. The language $L(P)$ is the smallest superset of $P$
such that
\begin{center}
if $\varphi,\psi\in L(P)$ then $\neg \varphi,\
(\varphi\wedge\psi), (\varphi\vee\psi), (\varphi\rightarrow\psi),
K_{ag}\varphi,\Box \varphi, C_f\varphi\in L(P), $,
\end{center}
  $C_f\varphi$ has to be read as ``the formula $\varphi$ conflicts with the free will of the agent
  $ag$",
  $K_{ag}\varphi$ has to be read as ``the agent $ag$ knows $\varphi$", and
  $\Box\varphi$ has to be read as ``$\varphi$ is necessary true".
\end{definition}

\begin{notation}Let $K$ be a Kripke model with the set of state $S$. For each
subset $A\subseteq S$, $K_A$ is defined to be the same Kripke
model $K$ which its set of states is restricted to the set $A$.
\end{notation}

\begin{definition}
In order to determine whether a formula $\varphi\in L(P)$ is true
in a current world $(K,s)$, denoted by $(K,s)\models \varphi$, we
look at the structure of $\varphi$:
\[
\begin{array}{l}\begin{array}{ccccc}
  (K,s)\models p & \emph{iff} &p\in V(s) \\
(K,s)\models
(\varphi\vee\psi) & \emph{iff} &  (K,s)\models\varphi~or~(K,s)\models\psi \\
(K,s)\models (\varphi\rightarrow \psi) & \emph{iff} & for~all~
t\in D(s),~ if
(K,t)\models\varphi~ then~(K,t)\models\psi \\
  (K,s)\models
(\varphi\wedge\psi) & \emph{iff} & (K,s)\models\varphi~and~(K,s)\models\psi \\
  (K,s)\models\neg\varphi & \emph{iff} & for~all~ t\in D(s),~(K,t)\not\models\varphi \\
 (K,s)\models
\Box\varphi & \emph{iff} &
for~all~ t\in D(s),~(K,t)\models\varphi \\

(K,s)\models K_{ag}\varphi & \emph{iff} & for~all~ t\in S,
~if~t\sim_{ag} s~then~(K,t)\models\varphi\\

(K,s)\models C_f\varphi & \emph{iff} &
there~exsits~an~infinite~set~ Path=\{
s_0,s_1,...\},\\~&~&~where~s_0=s,~and ~for ~each~i,~ s_{i+1}\in
D(s_i)~and~(K_{Path},s_i)\not\models \varphi

\end{array}\quad\quad\quad\quad\quad\quad\quad\quad\quad\quad\quad\quad\quad
\end{array} \]
 The current state
of the Kripke model $K$ is not a fixed state. The current state
evolves due to  $agent$'s operation, and
 it is not possible for the $agent$ to travel back in time from a
 state $s$ to one of its prefixes. Note that during the
evolution, the meaning function  does not change. That is, if the
current state is $s=(\vec{x},i)$ and due to executing an action
$a$, the current state changes to be $s'$, then
$s'=(\vec{x}.\langle a\rangle,i)$ for the same $i$. In this case,
we   call $\pi_i$ the actual meaning function of the universe.

 The
semantics of $C_f\varphi$ says that the agent $ag$ can interact
with the universe  and evolve it in a way that never $\varphi$
holds true. Therefore, the assumption of truth of $\varphi$
conflicts with the free will of the agent.
\end{definition}

\begin{definition} Let $K=\langle S,\Pi,\sim_{ag}, V \rangle$ be a persistently evolutionary Kripke structure.
We say
an action $a\in A$, at the state $s=(\vec{x},i)\in S$, is a static
action whenever for all $s_1,s_2\in D(s)$ , we have
$\pi_i(s_1,a)=\pi_i(s_2,a)$. That is, if the $agent$ starts   from
the state $s$ to perform actions, then the different orders that
he may perform the actions does not make the meaning of the action
`$a$' change.
\end{definition}

\begin{remark}
At each state, the $agent$ cannot go back to past to experience
his universe in different ways, thus he cannot distinguish between
static actions and persistently evolutionary ones.
\end{remark}
\begin{example}
Suppose $A=\mathbb{N}$ as a set of actions, and $P=\{p_{i,j}\mid
i,j\in \mathbb{N}\}$ as a set of atomic propositions.  For each
finite sequence of numbers $\vec{x}=\langle
x_1,x_2,...,x_n\rangle$, and $y\in A$,
\begin{itemize}
\item[] if for all $1\leq i\leq n$, $x_i\neq 5$,$x_i\neq 13$,
define $\pi(\vec{x}, y)=\{p_{y,1}\}$

\item[]  if for some $1\leq i\leq n$, $x_i= 5$ and for all $j<i$
$x_j\neq 13$ then
\begin{itemize}
\item if for some $t\leq\min(\{i|x_i=5\})$, $y=x_t$ then define
$\pi(\vec{x}, y)=\{p_{y,1}\}$ else define $\pi(\vec{x},
y)=\{p_{y,2}\}$.
\end{itemize}
\item[]  if for some $1\leq i\leq n$, $x_i= 13$ and for all $j<i$
$x_j\neq 5$ then
\begin{itemize}
\item if for some $t\leq\min(\{i|x_i=13\})$, $y=x_t$ then define
$\pi(\vec{x}, y)=\{p_{y,1}\}$ else define $\pi(\vec{x},
y)=\{p_{y,3}\}$.
\end{itemize}
\end{itemize}
The function $\pi$ satisfies the persistently evolutionary
condition. We have $(K,\langle 1,3\rangle)\models  p_{1,1}$. Also
$(K,\langle 1,3\rangle)\models C_f  p_{6,1}$, and $(K,\langle
1,3\rangle) \models  C_f  \neg p_{6,1}$.  The value of $p_{6,1}$
is not predetermined yet and depends on the free will of the
agent.
\end{example}

One may check that the indistinguishability relation $\sim_{ag}$
is
\begin{itemize}

\item[$1)$] \emph{reflexive} (for all $s\in S$, $s\sim_{ag}s$);

\item[$2)$] \emph{transitive} (for all $s,t,u\in S$, if
$s\sim_{ag}t$ and $t\sim_{ag}u$ then $s\sim_{ag}u$);

\item[$3)$] \emph{Euclidean} (for all three
 states $s,t,u\in S$
 if $s\sim_{ag}t$ and $s\sim_{ag}u$ then
$t\sim_{ag}u$).
\end{itemize} Therefore, persistently evolutionary Kripke structures are models
for the standard epistemic logic $S5$~\cite{kn:dit3} which
consists of axioms $A1-A5$ and the derivation rules $R1$ and $R2$
given below

\[
\begin{array}{l}\emph{A1:~ Axioms~of~propositional~logic}\quad\quad\quad\quad\quad
\quad\quad\quad\quad\quad\quad\quad\quad\quad\quad\quad\quad\\
\emph{A2:~} (K\varphi\wedge
K(\varphi\rightarrow\psi))\rightarrow K\psi\\
\emph{A3:~} K\varphi\rightarrow\varphi\\
\emph{A4:~} K\varphi\rightarrow KK\varphi\\
\emph{A5:~} \neg K\varphi\rightarrow K\neg K\varphi\\
\end{array} \]

\[
\begin{array}{l}\emph{R1:~}
\vdash\varphi,\ \vdash\varphi\rightarrow\psi\Rightarrow\ \vdash\psi\\
\emph{R2:~}\vdash\varphi\Rightarrow K\varphi,
\quad\quad\quad\quad\quad\quad\quad\quad\quad\quad\quad\quad\quad\quad\quad\quad\quad\quad\quad
\end{array} \]

\subsection{A Kripke Model for Persistently Evolutionary
Intensions}

 Now we describe the notion of persistently evolutionary
intensions using persistently evolutionary Kripke models.

Let $\textsc{Language}=\{w_1,w_2,...\}$ be a set of words for a
subject $IA$, and $X=\{x_1,x_2,...\}$ be an infinite set of
objects that could be assumed as possible extensions of words in
$\textsc{Language}$.

The subject chooses a word $w\in \textsc{Language}$ and an object
$x\in X$ to check whether $x$ is an extension of the word $w$ or
not. Therefore, the set of actions of the Kripke model is defined
to be $A_e=\{(w_i,x_j)\mid i,j\in \mathbb{N}\}$. The set of atomic
propositions is defined to be $P_e=\{p_{(w_i,x_j,0)}\mid i,j\in
\mathbb{N}\}\cup\{p_{(w_i,x_j,1)}\mid i,j\in \mathbb{N}\}$.

The agent $IA$ chooses a word $w_i$ and an object $x_j$ to check
whether $x_j$ is an extension of the word $w_i$. If at state $s$,
he chooses $(w_i,x_j)$ then the current state evolves to $s'=
s.\langle(w_i,x_j)\rangle$.

 We let the set of  meaning functions
$\Pi_e$ to be the set of all functions   $\pi_i$s which satisfy
the following conditions:

\begin{itemize}

\item[1-] For each state $s=(\vec{x},i)$, and action $(w_i,x_j)$
 either $\pi_i(s,(w_i,x_j))=p_{(w_i,x_j,1)}$
(we read it as ``at the current state $s$, the agent $IA$ checked
that whether $x_j$ is an extension of the word $w_i$ and found out
the answer `yes') or $\pi_i(s,(w_i,x_j))=p_{(w_i,x_j,0)}$ (we read
it as ``at the current state $s$, the agent $IA$ checked that
whether $x_j$ is an extension of the word $w_i$ and found out the
answer `no').

\item[2-] Each $\pi_i\in \Pi$ satisfies  the persistently
evolutionary condition.

\end{itemize}

We call the Kripke model $K_e=(S_e,\Pi_e,\sim_{ag},V_e)$
(introduced above) the model of persistently evolutionary
intensions.

\begin{definition}\label{peint} We say the intension of a word $w\in \textsc{Language}$ is
static (or its extension is not order-sensitive)
 at a state $s=(\vec{x},j)$ whenever for all
$a\in\{(w,x_i)\mid i\in \mathbb{N}\}$ and for all  $s_1$ and $s_2$
in $D(s)$, we have $\pi_j(s_1,a)=\pi_j(s_2,a)$.
\end{definition}

\begin{theorem} For each state $s=(\vec{x},j)\in S_e$ and each
word $w\in \textsc{Language}$ there exist two states
$s_1=(\vec{x},t)\in S_e$ and $s_2=(\vec{x},t)\in S_e$ such that
$s\sim_{ag} s_1$ and $s\sim_{ag} s_2$, and the intension of $w$ is
static at $s_1$, and persistently evolutionary at $s_2$.
\end{theorem}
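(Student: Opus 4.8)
The plan is to exploit the only genuine constraint that indistinguishability imposes, namely agreement along the already-observed history $\vec{x}$, and otherwise to build two meaning functions that behave exactly as we wish on the states reachable after $s$. Write $\vec{x}=\langle a_1,\dots,a_k\rangle$ with each $a_t\in A_e$. For any meaning function $\pi$ with index $t$, the relation $s\sim_{ag}(\vec{x},t)$ holds precisely when $\pi$ reproduces, at each prefix of $\vec{x}$, the same answer that $\pi_j$ gives there; the persistently evolutionary condition then keeps those answers locked along every continuation. So both witnesses are obtained by fixing the past to match $\pi_j$ and choosing the future freely. (Since one witness is static and the other is not, their meaning indices must of course differ; I write them $t_1\neq t_2$, reading the common letter in the statement as a typo.)

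For $s_1$ I would take the meaning function $\pi_{t_1}$ that is \emph{constant in its state argument}: for every action $a$ put $\pi_{t_1}(\cdot,a)=\alpha(a)$, where $\alpha(a)$ is the answer $\pi_j$ assigns at the prefix preceding the first occurrence of $a$ in $\vec{x}$ if $a$ occurs in $\vec{x}$, and $\alpha((w_m,x_n))=p_{(w_m,x_n,1)}$ otherwise. A constant function trivially satisfies the persistently evolutionary condition and so lies in $\Pi_e$; it is static by Definition~\ref{osens}, and in particular $\pi_{t_1}(s',a)=\alpha(a)$ for all $s'\in D(s_1)$, so the intension of $w$ is static at $s_1=(\vec{x},t_1)$ by Definition~\ref{peint}. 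The only point to check is $s\sim_{ag}s_1$: along $\vec{x}$, persistence of $\pi_j$ forces $\pi_j((\langle x_1,\dots,x_{t-1}\rangle,j),x_t)$ to equal its value at the first occurrence of $x_t$, which is exactly $\alpha(x_t)=\pi_{t_1}((\langle x_1,\dots,x_{t-1}\rangle,t_1),x_t)$. Hence the two functions agree on every prefix and $s\sim_{ag}s_1$.

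For $s_2$ I would keep the same behavior on the past but deliberately inject order-sensitivity on \emph{fresh} $w$-actions. Since $X$ is infinite and $\vec{x}$ is finite, only finitely many pairs $(w,x_i)$ occur in $\vec{x}$, so I can choose objects $x_i,x_\ell$ with $(w,x_i),(w,x_\ell)\notin\vec{x}$. Define $\pi_{t_2}$ to match $\alpha$ on every action occurring in $\vec{x}$ (so $s\sim_{ag}s_2$ exactly as above), and on the fresh actions set $\pi_{t_2}((\vec{x},t_2),(w,x_i))=p_{(w,x_i,1)}$ but $\pi_{t_2}((\vec{x}.\langle(w,x_\ell)\rangle,t_2),(w,x_i))=p_{(w,x_i,0)}$, completing the definition elsewhere in any manner consistent with persistence (for instance constantly ``yes''). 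The states $(\vec{x},t_2)$ and $(\vec{x}.\langle(w,x_\ell)\rangle,t_2)$ both lie in $D(s_2)$ yet assign different answers to $(w,x_i)$, so by Definition~\ref{peint} the intension of $w$ is \emph{not} static, i.e.\ persistently evolutionary, at $s_2=(\vec{x},t_2)$.

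The only genuinely delicate point is consistency: each constructed function must satisfy the persistently evolutionary condition \emph{globally} while simultaneously reproducing $\pi_j$ on all of $\vec{x}$. This is precisely why the witness objects for order-sensitivity must be chosen among the infinitely many $x_i$ whose $w$-action has not yet been performed in $\vec{x}$. For actions already appearing in $\vec{x}$ the value is locked by persistence and cannot be altered without breaking $s\sim_{ag}s_2$, whereas for a fresh action we are free to let its first-occurrence value depend on whether $(w,x_\ell)$ was queried beforehand, which is exactly order-sensitivity; the two different pre-occurrence assignments sit on distinct branches, so no single sequence ever sees a conflict. Once this freshness is secured, the remaining verifications, namely membership in $\Pi_e$ and the two $\sim_{ag}$ relations, are routine.
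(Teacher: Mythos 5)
Your proposal is correct. The paper's own proof is nothing but the assertion ``The proof is straightforward,'' and your construction supplies exactly the intended argument: since $\Pi_e$ contains \emph{every} function satisfying the persistence condition, you lock the answers along the prefixes of $\vec{x}$ (using persistence of $\pi_j$ to reduce everything to first-occurrence values) to secure $s\sim_{ag}s_1$ and $s\sim_{ag}s_2$, take a state-independent function for the static witness, and inject order-sensitivity only at $w$-actions fresh to $\vec{x}$ for the evolutionary witness --- including the correct observation that the repeated index $t$ in the statement must be a typo, as the two witnesses require distinct meaning indices $t_1\neq t_2$.
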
\begin{proof} The proof is straightforward.
\end{proof}

The above theorem says that it is not possible for the agent who
lives in the persistently evolutionary Kripke model $K_e$ to gets
aware that whether the intension of a word $w$ is static or
persistently evolutionary. It is because, at each stage of time
 (at each state of the Kripke model $K_e$) the agent only
observed a finite set of experiences, and as he cannot travel to
the past (go  back to a prefix of the current state), he cannot
experience  different orders of his behavior to be assure that if
the actual meaning function which the universe evolves in, is
order-sensitive or not.

\begin{quote}If the   agent wants to be aware of a change, then
he must experience an event different from the way that he has
experienced the same event already. But as the evolution happens
persistently, it is impossible.\end{quote} In persistently
evolution, the behavior of the agent changes the future which has
not yet occurred.

\section{Persistently Evolutionary Semantics for Propositional
Logic}\label{Logic}

Persistently evolutionary Kripke structures can be considered as
models for propositional logic.  Let $P_l=\{p_i\mid i\in I\}$ be a
set of atomic formulas. The language $L_l$ of propositional logic
is the smallest set containing $P_l$ satisfying the  following
condition:
\begin{center}
$\phi,\psi\in L\Rightarrow \varphi\wedge\psi, \varphi\vee\psi,
\neg\varphi, \varphi\rightarrow\psi, K\varphi, C_f\varphi,
\Box\varphi \in L_l$.
\end{center}
We say $K=\langle S,\Pi,\sim_{ag}, V \rangle$ is a  Kripke
structure for propositional logic whenever the set of actions is
$A=P_l$, the set of atomic formulas of the structure $K$ is
$\mathrm{P}=\{p=b\mid p\in P_l, b\in\{0,1\}\}$, and for each
$\pi_i\in\Pi$, $\pi_i((\langle
p_1,p_2,...,p_n\rangle,i),p)=\{p=b\}$ for some $b\in\{0,1\}$.

\begin{definition} For every formula $\varphi\in L_l$, we define
\[
\begin{array}{l}\begin{array}{ccccc}
  (K,s)\models p & \emph{iff} &p=1\in V(s) \\
(K,s)\models
(\varphi\vee\psi) & \emph{iff} &  (K,s)\models\varphi~or~(K,s)\models\psi \\
(K,s)\models (\varphi\rightarrow \psi) & \emph{iff} & for~all~
t\in D(s),~ if
(K,t)\models\varphi~ then~(K,t)\models\psi \\
  (K,s)\models
(\varphi\wedge\psi) & \emph{iff} & (K,s)\models\varphi~and~(K,s)\models\psi \\
(K,s)\models\neg\varphi & \emph{iff} & for~all~ t\in D(s),~(K,t)\not\models\varphi \\
(K,s)\models K_{ag}\varphi & \emph{iff} & for~all~ t\in S,
~if~t\sim_{ag} s~then~(K,t)\models\varphi\\

K,s)\models \Box\varphi & \emph{iff} &
for~all~ t\in D(s),~(K,t)\models\varphi \\

(K,s)\models C_f\varphi & \emph{iff} &
there~exsits~an~infinite~set~ Path=\{
s_0,s_1,...\},\\~&~&~where~s_0=s,~and ~for ~each~i,~ s_{i+1}\in
D(s_i)~and~(K_{Path},s_i)\not\models \varphi

\end{array}\quad\quad\quad\quad\quad\quad\quad\quad\quad\quad\quad\quad\quad
\end{array} \]
\end{definition}

One may check that if we omit the operator $K$, $\Box$, and $C_f$
from the language $L_l$ then the persistently evolutionary
semantics is sound and complete for intuitionistic propositional
logic (see Chapter 2, \cite{kn:TD}).

\section{Persistently Evolutionary Semantics for Predicate
Logic}\label{Logic2} In this part, we propose a persistently
evolutionary semantics for predicate logic.

A predicate language  $L_o$  contains
\begin{itemize}
\item  a set of predicate symbols $\mathcal{R}$, and a natural
number $n_R$ for each $R\in \mathcal{R}$ as its ary,

\item  a set of function symbols $\mathcal{F}$, and a natural
number $n_f$ for each $f\in \mathcal{F}$,

\item a set of constant symbols $\mathcal{C}$.

\end{itemize}

\begin{definition}
A \emph{partial} $L_o$-structure $\mathcal{N}$  is given by the
following data
\begin{itemize}
\item[1)] a nonempty set $N$ called the domain,

\item[2)] a partial function $f^\mathcal{N}:N^{n_f}\rightarrow N$,
for each $f\in \mathcal{F}$,

\item[3)] a set $R^\mathcal{N}\subseteq N^{n_R}$ for each $R\in
\mathcal{R}$,

\item[4)] a \emph{partial} zero-ary function $c^\mathcal{N}\in N$
for each $c\in C$. (In this way, there could be some constant
symbols $c\in C$, which are not interpreted in the structure.)
\end{itemize}
\end{definition}
We refer to $R^\mathcal{N},f^\mathcal{N},c^\mathcal{N}$ as
interpretations of   symbols $R,f,c$.

\begin{definition}
 TERM is the smallest set containing

\begin{itemize}
\item[] variable symbols,

\item[] constants  symbols in $C$,

\item[]  for each function symbol $f\in \mathcal{F}$, if
$t_1,t_2,...,t_{n_f}\in TERM$ then $f(t_1,t_2,...,t_{n_f})$ is a
term.
\end{itemize}

\end{definition}
 The interpretation of a term $t$, denoted by $t^\mathcal{N}$ is
 defined to be  a \emph{partial }function from $N^k$ to $N$ for some
 $k$, similar to the interpretation of terms in model theory (see
 definition~1.1.4~of~\cite{kn:model}). The only difference is that
 the interpretations are partial functions.

\begin{definition}
FORMULA is the smallest set satisfying the following conditions:
\begin{itemize}
\item[] $\perp\in Formula$,

\item[] $t_1,t_2\in TERM$ then $t_1=t_2\in FORMULA$,

\item[] for each predicate  symbol $R\in \mathcal{R}$, if
$t_1,t_2,...,t_{n_R}\in TERM$ then $R(t_1,t_2,...,t_{n_R})\in
FORMULA$

\item[] $\varphi,\psi\in FORMULA$ then $\neg\varphi,
\varphi\wedge\psi, \varphi\vee\psi, \varphi\rightarrow\psi,
\forall y\varphi, \exists y\varphi \in FORMULA$.
\end{itemize}

\end{definition}
A persistently evolutionary Kripke structure $K_{L_o}$ for the
language $L_o$ is defined as follows:
\begin{itemize}
\item[-] a nonempty set $\mathcal{O}$ called the domain,

\item[-] The set of actions of the Kripke structure is

$A_O=\{R(\vec{o})\mid \vec{o}\in \mathcal{O}^{n_R},~R\in
\mathcal{R}\}\cup$

$ \{f(\vec{o})\mid \vec{o}\in \mathcal{O}^{n_f},~f\in \mathcal{F}
\}\cup$

 $\{c\mid  c\in \mathcal{C}\}$.

\item[] The set of atomic propositions of the Kripke structure is

$P_O=\{(R(\vec{o})=b)\mid b\in\{0,1\},\vec{o}\in \mathcal{O}^{n_R}, R\in \mathcal{R} \}\cup$\\
 $\{(f(\vec{o})=o')\mid \vec{o}\in \mathcal{O}^{n_f},
o'\in \mathcal{O}, f\in \mathcal{F}\}\cup$\\
$\{(c_j=o)\mid c\in \mathcal{C}, o\in \mathcal{O}\}$.
\end{itemize}

 The set of meaning functions $\Pi$ of the Kripke structure is
the set of all functions $\pi_i$, which satisfy persistently
evolutionary condition and
\begin{itemize}
\item[] $\pi(s,R(\vec{o}))=\{(R(\vec{o})=b)\}$ for some $
b\in\{0,1\}$, and

\item[] $\pi(s,f(\vec{o}))=\{(f(\vec{o})=o')\}$ for some $o'\in
\mathcal{O}$.

\item[]$\pi(s,c)=\{(c=o)\}$, for some $o\in \mathcal{O}$.
\end{itemize}

The meaning of predicates and the value of functions  are not
predetermined in the Kripke structure. As soon as the agent $ag$
chooses a predicate symbol $R$ and a tuple $(o_1,o_2,...,o_{n_R})$
to find the value of $R(o_1,o_2,...,o_{n_R})$, the meaning
function gives out an atomic proposition
$(R(o_1,o_2,...,o_{n_R})=b)$, $b\in\{0,1\}$, and the current state
evolves to a new state.

\begin{definition}
Let $s$ be a state of the persistently evolutionary Kripke model
$K_{L_o}$. The partial $L_o$-structure of  the state $s$, denoted
by $\mathcal{N}_s$, is defined as follows:
\begin{itemize}
\item[1-] the domain of the structure $N_s$ is the same domain of
the  Kripke model $\mathcal{O}$.

\item[2-] For each symbolic predicate  $R\in \mathcal{R}$, the
relation $R^{\mathcal{N}_s}$ is defined to be $\{\vec{o}\mid
(R(\vec{o})=1)\in V(s)\}$.

\item[3-] For each symbolic function $f\in \mathcal{F}$, the
partial  function $f^{\mathcal{N}_s}$ is defined to be
$\{(o,o')\mid (f(o)=o')\in V(s)\}$.

\item[4-] For each symbolic constant $c\in \mathcal{C}$, we define
$c^{\mathcal{N}_s}=o$ if $(c=o)\in V(s)$.
\end{itemize}
\end{definition}
 We say a constant $c$ is
predetermined at a state $s$
 whenever for some $o\in \mathcal{O}$, $c^{\mathcal{N}_s}=o$.
 We say a predicate symbol $R$ is predetermined for
$\vec{o}$ at a state $s$, whenever for some $b\in\{0,1\}$,
$(R(\vec{o})=b)\in V(s)$. We say a function symbol $f$ is
predetermined for
 $\vec{o}$ at a state $s$, whenever for some $o'\in \mathcal{O}$,
 $f^{\mathcal{N}_s}(\vec{o})=o'$. We simply can inductively define being
 predetermined for terms and formulas.

\begin{definition} Let $K_{L_o}=\langle S,\Pi=\{\pi_i\mid i\in I\},
V\rangle$ be a persistently evolutionary Kripke structure for the
language $L_o$. Let $s$ be a state of this model. Also let $\phi$
be a formula with free variables $\vec{y}=(y_1,y_2,...,y_n)$, and
let $\vec{o}=(o_1,o_2,...,o_n)\in \mathcal{O}^n$. We inductively
define $(K,s)\models \varphi(\vec{o})$ as follows.

\begin{itemize}
\item[-] if $\phi$ is $t_1=t_2$, then $(K,s)\models \phi(\vec{o})$
iff $t_1^{\mathcal{N}_s}(\vec{o})=t_2^{\mathcal{N}_s}(\vec{o})$,

\item[] if $\phi$ is $R(t_1,t_2,...,t_{n_R})$, then $(K,s)\models
\phi(\vec{o})$ iff $(t_1^{\mathcal{N}_s}(\vec{o}),
t_2^{\mathcal{N}_s}(\vec{o}),...,t_{n_R}^{\mathcal{N}_s}(\vec{o}))\in
R^{\mathcal{N}_s}$,

\item[] if  $\phi$ is $\neg\psi$, then $(K,s)\models
\phi(\vec{o})$ iff for all $w\in D(s)$, $w\not\models
\psi(\vec{o})$,

\item[]  if  $\phi$ is $\varphi\rightarrow \psi$, then
$(K,s)\models \phi(\vec{o})$ iff for all $w\in D(s)$, $w\models
\varphi(\vec{o})$, then $w\models \psi(\vec{o})$,

\item[] if  $\phi$ is $\varphi\wedge \psi$, then $(K,s)\models
\phi(\vec{o})$ iff   $s\models \varphi(\vec{o})$, and $s\models
\psi(\vec{o})$,

\item[] if  $\phi$ is $\varphi\vee \psi$, then $(K,s)\models
\phi(\vec{o})$ iff   $s\models \varphi(\vec{o})$, or $s\models
\psi(\vec{o})$,

\item[]   if  $\phi$ is $\forall x \psi(\vec{y},x)$, then
$(K,s)\models \phi(\vec{o})$ iff for all $w\in D(s)$, for all
$o'\in \mathcal{O}$, if $\psi(\vec{o},o')$ is defined
 in the partial structure
$\mathcal{N}_w$ (\underline{predetermined} at the state $w$) then
$w\models \psi(\vec{o},o')$,

 if  $\phi$ is $\exists x \psi(\vec{y},x)$, then
$(K,s)\models \phi(\vec{o})$ iff there exists $o'\in \mathcal{O}$,
$s\models \psi(\vec{o},o')$.
\end{itemize}
\end{definition}

\begin{proposition}
For every formula $\varphi\in L_o$, and every state $(K,s)$, if
$(K,s)\models \varphi$ then for all $s'\in D(s)$, $(K,s')\models
\varphi$.
\end{proposition}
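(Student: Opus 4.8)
The plan is to proceed by \emph{structural induction} on the formula $\varphi$, after first isolating two monotonicity facts that do all the work. First I would record the \emph{persistence of the valuation}: for every state $s$ and every $s'\in D(s)$ we have $V(s)\subseteq V(s')$. This is immediate from the definition of $V$ as a union indexed by the positions of the action sequence, together with the persistently evolutionary condition on $\pi_j$, which guarantees that extending the sequence only adds atomic facts and never overwrites a value that has already been fixed. In particular $V(s')$ cannot contain both $(R(\vec{o})=0)$ and $(R(\vec{o})=1)$, so the partial structures grow consistently: $R^{\mathcal{N}_s}\subseteq R^{\mathcal{N}_{s'}}$, $f^{\mathcal{N}_s}\subseteq f^{\mathcal{N}_{s'}}$ (as graphs of partial functions), and any constant already interpreted at $s$ keeps its value at $s'$. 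Second I would record the \emph{transitivity of $D$}: if $s'\in D(s)$ then $D(s')\subseteq D(s)$, since ``being a prefix'' is transitive and the meaning index is preserved along $D$.

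Before the main induction I would run a \emph{sub-induction on terms} to show that if $t^{\mathcal{N}_s}(\vec{o})$ is defined then $t^{\mathcal{N}_{s'}}(\vec{o})=t^{\mathcal{N}_s}(\vec{o})$ for all $s'\in D(s)$. Variables are state-independent; constants and the outermost function application persist because the relevant atomic facts lie in $V(s)\subseteq V(s')$; the arguments persist by the term induction hypothesis. With this in hand the two atomic cases $t_1=t_2$ and $R(t_1,\dots,t_{n_R})$ follow directly from valuation persistence.

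For the induction step I would split the connectives into two groups according to their clause in the truth definition. The \emph{future-quantifying} cases $\neg\psi$, $\varphi\rightarrow\psi$ and $\forall x\,\psi$ quantify over all $w\in D(s)$; since $D(s')\subseteq D(s)$, the defining condition for $s$ restricts verbatim to $s'$, so these cases need only transitivity of $D$ and no induction hypothesis at all. The \emph{local} cases $\varphi\wedge\psi$, $\varphi\vee\psi$ and $\exists x\,\psi$ are evaluated at $s$ itself, so here I would simply apply the induction hypothesis to each immediate subformula (and, for $\exists$, keep the same witness $o'$) to transport truth from $s$ to $s'$.

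The only step requiring genuine care—the \emph{main obstacle}—is the atomic layer: establishing $V(s)\subseteq V(s')$ together with the induced growth of the partial structures from the persistently evolutionary condition, and then threading this through the term sub-induction so that already-determined term values are frozen along $D$. This is exactly the point at which the persistently evolutionary condition is indispensable; once it is in place, the propositional and quantifier cases reduce to the standard monotonicity argument for intuitionistic Kripke semantics and go through routinely.
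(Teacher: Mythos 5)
Your proof is correct, and it takes exactly the route the paper intends: the paper's own ``proof'' is only the assertion that the result is straightforward, and your structural induction is the standard argument being alluded to. The two facts you isolate---persistence of the valuation, $V(s)\subseteq V(s')$ for $s'\in D(s)$ (with consistency guaranteed by the persistently evolutionary condition), and transitivity of $D$, so that $D(s')\subseteq D(s)$---together with the term sub-induction are precisely what make each case of the induction go through, the future-quantified connectives ($\neg$, $\rightarrow$, $\forall$) needing only the second fact and the local connectives ($\wedge$, $\vee$, $\exists$) needing only the induction hypothesis.
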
\begin{proof} It is straightforward. \end{proof}

\subsection{Free Will}
One of our purpose of proposing persistently evolutionary
semantics is to provide a framework to formalize the notion of
free will. We discussed the notion of free will in section~4.3
of~\cite{kn:comp1}. In this part, we repeat the same discussion
using persistently evolutionary Kripke structures.

Let $R$ be a a one-ary predicate symbol. Let
$\mathcal{O}=\{0,1\}^*$ be the set of all finite strings over $0$
and $1$. Consider the meaning function $\pi_j$ as follows: for
$s=(\langle R(x_1), R(x_2),...,R(x_k)\rangle,j)$, and $x\in
\{0,1\}^*$,
\begin{itemize}
\item if for some $1\leq i\leq k$, $x=x_i$  then  $\pi_j(s,R(x))$
is defined to be  $\pi_j(s',R(x_i))$ for $s'=(\langle R(x_1),
R(x_2),...,R(x_{i-1})\rangle,j)$,


\item if for all $1\leq i\leq k$, $x\neq x_i$, and  there exists
$1\leq i\leq k$, such that $x_i=x0$ or $x_i=x1$ and for
$s'=(\langle R(x_1), R(x_2),...,R(x_{i-1})\rangle,j)$,
$\pi_j(s',R(x_i))=\{R(x_i)=1\}$ then $\pi_j(s,R(x))$ is defined to
be $\{ (R(x)=0)\}$,

\item otherwise, $\pi_j(s,R(x))$ is defined to be $\{ (R(x)=1)\}$,
\end{itemize}
It is easy to check that the meaning function $\pi_j$ behaves
similar to the persistently evolutionary Turing machine $PT_1$
introduced in example~4.6 in~\cite{kn:comp1}. The next theorem is
a formal version of the theorem~4.9 in~\cite{kn:comp1}. Let $K'$
be the persistently evolutionary Kripke model which the set of its
meaning function $\Pi$ is $\{\pi_j\}$.

\begin{theorem}\label{fr} Let
\begin{center}
$\varphi:= (\exists k\in \mathbb{N})(\forall n>k)(\exists
x\in\{0,1\}^*)(|x|=n\wedge R(x))$.
\end{center} We have for the initial state $s=(\langle\rangle,j)$, \begin{center} $(K',s)\models \Box C_f \varphi
\wedge \Box C_f \neg \varphi$.\end{center}
\end{theorem}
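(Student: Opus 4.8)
The plan is to unfold the $\Box$ and reduce the statement to the following: for every $t\in D(s)$ one must exhibit two infinite paths out of $t$, one witnessing $(K',t)\models C_f\varphi$ and one witnessing $(K',t)\models C_f\neg\varphi$. First I record the two features of $\pi_j$ that drive everything. By the defining clauses, a freshly queried string $x$ receives $R(x)=1$ unless one of its children $x0,x1$ has already been set to $1$, in which case $R(x)=0$; and by the persistently evolutionary condition (together with the monotonicity Proposition above) once a string's value is fixed it is never revised, so $V$ only grows along $D(\cdot)$. Finally, at any state only finitely many strings are determined, so from any $t$ cofinitely many ``layers'' $\{x:|x|=n\}$ are entirely untouched. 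Abbreviate $\chi(n):=\exists x\,(|x|=n\wedge R(x))$; note that along any path $\chi(n)$ is (predetermined) \emph{true} at a state as soon as some length-$n$ string has been set to $1$, and is (predetermined) \emph{false} once every length-$n$ string has been set to $0$.

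For $C_f\varphi$ I would build the ``layer-killing'' path. Starting from $t$, I repeatedly pick a length $m$ larger than every string queried so far, first query all $2^{m+1}$ strings of length $m+1$ — each receives $1$ because its own children (length $m+2$) are still untouched — and then query all $2^{m}$ strings of length $m$ — each now receives $0$ because one of its children has just become $1$. This makes layer $m$ dead, and iterating with $m\to\infty$ produces infinitely many dead layers $m_1<m_2<\cdots$ along the path. To see $\varphi$ fails at every state $s_i$ of this path in $K_{Path}$: for any candidate witness $k$ there is a later dead layer $n=m_r>k$, and at the path-state $w$ where that layer is completed $\chi(n)$ is predetermined and false; by the $\forall$- and $\rightarrow$-clauses evaluated over $D(\cdot)$ this refutes $\forall n>k\,\chi(n)$, hence refutes $\varphi$. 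Thus $(K_{Path},s_i)\not\models\varphi$ for all $i$, giving $(K',t)\models C_f\varphi$.

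For $C_f\neg\varphi$ I would build the ``keep-alive'' path. Let $k$ be the maximal length of any string determined in the finite history at $t$ (so the at-most-finitely-many dead layers inherited from $t$ all lie at length $\le k$). Now I query the remaining strings in nondecreasing length order, layer $k+1$ first, then layer $k+2$, and so on; since a string is queried strictly before all of its (strictly longer) children, each such string receives $1$, so every layer above $k$ becomes entirely alive and stays alive by persistence. At any state $s_i$ of this path, every predetermined instance $\chi(n)$ with $n>k$ is therefore \emph{true}, while untouched $n$ leave the guard in the $\forall$-clause vacuous; hence $\forall n>k\,\chi(n)$ holds and $\varphi$ holds at $s_i$ with witness $k$. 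Consequently $\neg\varphi$ is \emph{false} at $s_i$ (its defining clause fails already via $v=s_i\in D(s_i)$), so $(K_{Path},s_i)\not\models\neg\varphi$ for all $i$, giving $(K',t)\models C_f\neg\varphi$.

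Since $t\in D(s)$ was arbitrary and both constructions use only that cofinitely many layers are free at $t$, we obtain $(K',t)\models C_f\varphi$ and $(K',t)\models C_f\neg\varphi$ for every $t\in D(s)$, i.e. $(K',s)\models\Box C_f\varphi\wedge\Box C_f\neg\varphi$. The delicate point — and the step I expect to absorb most of the work — is the bookkeeping around the predetermination-guarded $\forall$ and the $K_{Path}$-relative evaluation: one must check that on the keep-alive path no later query ever turns a layer above $k$ dead (so that predetermined instances of $\chi(n)$ stay true), and that on the layer-killing path the dead layers really are cofinal so that the existential $\exists k$ cannot escape them, with the finite ``contaminated'' history at $t$ harmlessly absorbed by the choice of $k$.
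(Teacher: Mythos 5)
Your proof is correct and takes essentially the same approach as the paper: the paper also exhibits two interaction orders, an enumeration of $\{0,1\}^*$ by nondecreasing length (parents queried before children, so every string receives $1$), witnessing $C_f\neg\varphi$, and an alternating enumeration $\preceq_2$ in which each even layer is queried before the odd layer below it (so all odd layers die), witnessing $C_f\varphi$ — exactly your keep-alive and layer-killing paths. If anything, your version is more complete: the paper constructs its two paths only from the initial state $(\langle\rangle,j)$ and leaves the $\Box$ (an arbitrary $t\in D(s)$ with a finite already-determined history) implicit, whereas your choice of the witness $k$ as the maximal determined length and of fresh layers $m$ above the history handles that quantifier explicitly.
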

\begin{proof} The agent can develop the future in two ways such
if  the first way happens $\varphi$ is true in the universe,  but
if the second happens $\neg\varphi$ is true.

We define two ordering $\preceq_1,\preceq_2$ on the elements of
$\{0,1\}^*$ as follows. Let $x_1,x_2\in\{0,1\}^*$.
\begin{itemize}
\item[1-] if $|x_1|<|x_2|$ then $x_1\preceq_1 x_2$,

\item[2-] if  $|x_1|=|x_2|$ then \begin{itemize} \item[]
$0\preceq_1 1$,

\item[] if $x_1\preceq_1 x_2$ then $x_1a\preceq_1 x_2a$, for $a\in
\{0,1\}$,

\item[] $x_10\preceq_1 x_11$.
\end{itemize}
\end{itemize}, and

\begin{itemize}
\item[1-] if $|x_1|+1<|x_2|$ then $x_1\preceq_2 x_2$,

\item[2-] if  $|x_1|=|x_2|$ then \begin{itemize} \item[]
$x_1\preceq_2 x_2$ iff $x_1\preceq_1 x_2$,
\end{itemize}
\item[3-] if $|x_1|+1=|x_2|$ and $|x_1|$ is even then
$x_1\preceq_2 x_2$,

\item[4-] if if $|x_1|+1=|x_2|$ and $|x_1|$ is odd then
$x_2\preceq_2 x_1$,
\end{itemize}

Now let $y_1,y_2,...$ be an enumeration of element of $\{0,1\}^*$
with respect of the ordering $\preceq_1$, and $z_1,z_2,...$ be an
enumeration of element of $\{0,1\}^*$ with respect of the ordering
$\preceq_2$. For each $n\in \mathbb{N}$, let $s_n=(\langle
R(y_1),R(y_2),...,R(y_n)\rangle, j)$, and $s'_n=(\langle
R(z_1),R(z_2),...,R(z_n)\rangle, j)$. Let
$Path_1=\{s_1,s_2,...\}$, and $Path_2=\{s'_1,s'_2,...\}$. We are
done.
\end{proof}

Let $(K,s)$ be an arbitrary state of a persistently evolutionary
Kripke model. One may easily observe that for all formula $\psi$,
$(K,s)\models  \Box C_f\varphi\rightarrow \Box\neg K_{ag}\psi$. It
is because if $(K,s)\models   C_f\psi$ then $(K,s)\not\models
\psi$ and thus $(K,s)\not\models K_{ag}\psi$.

Therefore, for the formula $\varphi$ in theorem~\ref{fr}, we have
$(K',s)\models \Box\neg K_{ag} \varphi \wedge \Box\neg K_{ag} \neg
\varphi$. It says that the agent $ag$ never have evidence for
$\varphi$ and never have evidence for $\neg\varphi$. Therefore the
principle of ``from perpetual ignorance to negation" (PIN, see
Chapter~5 of~\cite{kn:ob}) is not true in persistently
evolutionary Kripke models.

\section{ A Persistently Evolutionary Kripke Structure for
 Computation environments}\label{compu}

In this section, we propose a persistently evolutionary Kripke
structure, $K_{ce}$, for the notion of computation environments.
The language of computation environment $L_{ce}$ contains

\begin{itemize}
\item a predicate symbol $SB$ for the successful box,

\item a function symbol $TB$ for the transition box.

\end{itemize}
Let $INST_s$ and $CONF_s$ be two set introduced in the Turing
computation environment (see example~3.4 of~\cite{kn:comp1}). The
set of actions is defined to be \begin{itemize}\item[]
$A_{ce}=\{SB(C)\mid C\in CONF_s\}\cup \{ TB(C,\iota)\mid C\in
CONF_s, \iota\in INST_s\}$.
\end{itemize}
The set of atomic proposition of the Kripke structure $K_{ce}$ is
defined to be
\begin{itemize}
\item[] $P_{ce}=\{SB(C)=b\mid b\in\{0,1\}, C\in
CONF_s\}\cup\{TB(C,\iota)=C'\mid C,C'\in CONF_s, \iota\in
INST_s\}$.
\end{itemize}

The set of meaning functions $\Pi_{ce}$ is defined to be the set
of all functions $\pi$ which satisfy persistently evolutionary
condition, and for every $s\in S_{ce}$,
$C=(q,xb_1\underline{a}b_2y)\in CONF_s$, and $\iota\in INST_s$,

\begin{itemize}
\item[] if $\pi(s,SB(C))=\{ SB(C)=1\}$  then either
$C=(h,\underline{\triangle}x)$ or $C=(h,x\underline{\triangle})$,

\item[] if $C=(h,\underline{\triangle}x)$ then $\pi(s,SB(C))=\{
SB(C)=1\}$,

\item[]
$\pi(s,TB(C,\iota))=\{TB(C,\iota)=(p,xb_1c\underline{b_2}y)\}$ for
$\tau=[(q,a)\rightarrow (p,c,R)]$,

\item[]
$\pi(s,TB(C,\iota))=\{TB(C,\iota)=(p,x\underline{b_1}cb_2y)\}$ for
$\tau=[(q,a)\rightarrow (p,c,L)]$.

\item[] if $\tau\neq[(q,a)\rightarrow (p,c,L)]$ and
$\tau\neq[(q,a)\rightarrow (p,c,R)]$ then
$\pi(s,TB(C,\iota))=\{TB(C,\iota)=\perp\}$.
\end{itemize}


  Let $\pi_i$ be the meaning function that behaves accord to
$SBOX_s$ and $TBOX_s$ of the Turing computation environment. We
prove

\begin{center}$(K_{ce},(\langle\rangle,i))\models \neg K_{ag}
(\mathrm{P=NP})$.\end{center}

  To do this, we should prove that
for every finite sequence of actions $\vec{a}$ in $A_{ce}$, there
exists a meaning function $\pi_j$ such that
$(\vec{a},i)\sim_{ag}(\vec{a},j)$, and
$(K_{ce},(\vec{a},j))\not\models (\mathrm{P=NP})$.

Suppose $\vec{a}=\langle a_1,a_2,...,a_n\rangle$, and let
$H=\{a_i\mid a_i= SB(h,x\underline{\triangle})\}$. We construct a
meaning function $\pi_j$ that considers the following boxes. For
the symbol function $TB$ the meaning function $\pi_j$ behaves
based on the transition box $TBOX_s$. For the symbol predicate
$SB$, it behaves as follows: We persistently evolve the
persistently evolutionary machine $PT_1$ in the way that for every
$x\in \Sigma^*$, if there exists a configuration
   $C=(h,x\underline{\triangle})\}$ such that $SB(C)\in H$, then the
   machine $PT_1$, after evolution, outputs $1$ for $x$ if and
   only if $\pi_i((\vec{a},i), SB(C))=1$~\footnote{Actually,  since $\pi_i$
   is the meaning function that accords with the $SBOX_s$ of the
   Turing computation  environment, for all $C=(h,x\underline{\triangle})\}$ such that $SB(C)\in
   H$, we have $\pi_i((\vec{a},i), SB(C))=1$.}. Then we construct
   a successful box, denoted by  $SBOX'$, which its inner structure is similar  to $SBOX_e$
   except that instead of the $PT_1$ machine, we replaced the above
   evolved $PT_1$ machine. Now, we let $\pi_j$ be the meaning
   function that behaves accord to $SBOX'$. Then the two
   followings are straightforward.

   \begin{itemize}
\item[1-] $(\vec{a},i)\sim_{ag}(\vec{a},j)$, and

\item[2-]$(K_{ce},(\vec{a},j))\not\models (\mathrm{P=NP})$
   \end{itemize} One should verify  that at the state $(\vec{a},j))$, the
   formula $\mathrm{P=NP}$ conflicts with the free will of the
   agent (see the proof of theorem~5.8 of~\cite{kn:comp1}), and thus we have $(K_{ce},(\vec{a},j))\not\models
   (\mathrm{P=NP})$. Therefore,\begin{center}$(K_{ce},(\langle\rangle,i))\models \neg K_{ag}
(\mathrm{P=NP})$.\end{center}
   The finite  sequence $\vec{a}$ was assumed to be arbitrary. Therefore,
   we proved that for all finite sequence of actions $\vec{a}$, $(K_{ce},(\langle\rangle,i))\models \neg K_{ag}
(\mathrm{P=NP})$, and it informally means that
\begin{itemize}
\item[] the agent $ag$ can never know (have evidence for)
$\mathrm{P=NP}$.
\end{itemize}

\end{document}